\pgfplotsset{width=5cm,compat=1.9}
\renewcommand{\G}{\mathbf{G}}
\newcommand{\F}{\mathbf{F}}
\newcommand{\X}{\mathbf{X}}
\renewcommand{\U}{\mathbf{U}}
\newcommand{\true}{\mathbf{True}}
\newcommand{\false}{\mathbf{False}}
\newcommand{\fcp}{\mathcal{P}}
\newcommand{\fct}{\mathcal{T}}
\newcommand{\sym}{\bigtriangleup}
\renewcommand{\paragraph}[1]{\noindent\textbf{#1}}
\newcommand{\reals}{\mathbb{R}}
\newcommand{\pushcode}[1][1]{\hskip\dimexpr#1\algorithmicindent\relax}
\newcounter{assumption}[section]
\title{A Metric for Linear Temporal Logic}
\begin{document}

%\orcidID{0000-1111-2222-3333}
%\orcidID{1111-2222-3333-4444}
%\orcidID{2222--3333-4444-5555}
% \inst{1}
\author{{\'I}{\~n}igo {\'I}ncer Romeo \and Marten Lohstroh \and
Antonio Iannopollo \and Edward A. Lee \and Alberto Sangiovanni-Vincentelli}
\institute{Department of Electrical Engineering and Computer Sciences \\ University of California, Berkeley CA 94720, USA \\ \email{\{inigo, marten, antonio, eal, alberto\}@berkeley.edu}}

\authorrunning{I. {\'I}ncer Romeo {\it et al.}}

\maketitle

\begin{abstract}
    We propose a measure and a metric on the sets of infinite traces generated by a set of atomic propositions. To compute these quantities, we first map properties to subsets of the real numbers and then take the Lebesgue measure of the resulting sets. We analyze how this measure is computed for 
    Linear Temporal Logic (LTL) formulas. An implementation for computing the measure of bounded LTL properties is provided and explained. This implementation leverages SAT model counting and effects independence checks on subexpressions to compute the measure and metric compositionally.
\end{abstract}

\section{Introduction}

% \fixme{
% \cite{incer18quotient}
% \cite{iannopollo16synthesis}
% \cite{liu2006modeling}
% \cite{iannopollo18decomposition}
% }

% Specifications, platform-based design, quotient.
% Distance between two specifications...

% See: https://pdfs.semanticscholar.org/c887/a4b9799475c974215b412340e4e404ec65e8.pdf

Linear Temporal Logic (LTL), introduced by Amir Pnueli in \cite{pnueliLtl}, allows us to make logical statements over time. This reasoning device is often used to verify whether a system possesses a given temporal property, as well as to synthesize a system with such a property. Applications of temporal logic are vast, ranging from logic design to robotics. In this paper, we consider the following problem: given a temporal logic formula, how large is the set of traces that satisfies it? Given two temporal formulas, are they alike?

Others who have studied this problem \cite{madsen18metric} have shown that quantifying the difference between two specifications can be used, for instance, to evaluate the performance of temporal logic inference algorithms by comparing made inferences to a ground truth. Our own motivation to investigate this problem is its relevance to the design of Cyber-Physical Systems (CPS). ``Late validation'' (i.e., discovering errors or design flaws late in a product's development process or past its deployment) is particularly problematic for CPS because it comes at large human and financial cost; design faults may cause accidents, and solving them may require product recalls or cause manufacturing delays. In an effort to add more rigor to the design process for complex CPS, contract-based design \cite{sangiovanni2012taming, EDA-053} provides tools for reasoning formally about the conditions for correctness of element integration and specification abstraction and refinement.
% has appeared recently as a means to increase the formalization of the design process
%  \cite{EDA-053} has appeared recently as a means to increase the formalization of the design process
% mitigate this
%system design. For several years now, it has been obvious that fundamentally new approaches are needed to design complex Cyber-Physical Systems (CPS). The state of affairs in design today leaves much to be desired: the complexity of modern vehicles and planes has rendered inadequate traditional design based on simulation and prototyping
% , resulting in s Contract-based design \cite{EDA-053} has appeared recently as a means to increase the formalization of the design process. 

Using contracts, all components in a design clearly state their environment assumptions and the behaviors they guarantee (with behaviors expressed in formal languages),
allowing for compositional reasoning, also in an automated fashion. 
In \cite{iannopollo16synthesis,iannopollo18decomposition}, Iannopollo \emph{et al.}
describe techniques based on Counterexample-Guided Inductive Synthesis (CEGIS) \cite{Solar-Lezama:2006:CSF:1168857.1168907} to synthesize designs from libraries of contracts, 
satisfying a certain LTL specification. 
Although contracts and their properties help reduce the complexity
of the overall synthesis problem, 
scalability remains elusive, 
as the CEGIS loop depends on LTL model checking, a {\sf PSPACE}-complete problem \cite{sistla85ltl}. 
To tackle complexity, synthesis must be incremental, and we believe an incremental procedure may be within reach soon. For example, {\'I}ncer Romeo {\it et al.} recently introduced means to compute the operation of quotient for assume-guarantee contracts in \cite{incerromeo18quotient}.
Given a specification $\mathcal{C}$ to be implemented, and the specification $\mathcal{C}'$ of a component to be used in the design, the quotient describes the properties that need to be satisfied, in addition to those required by $\mathcal{C}'$, in order to meet $\mathcal{C}$. Therefore, when synthesizing a specification given as an assume-guarantee contract, we can synthesize partial solutions of the design and keep the one whose quotient is the ``smallest'' since we believe the ``size'' of a quotient could be a good indication of complexity. But in order to measure the size of the specification, we need a measure of LTL properties. Introducing that measure is the purpose of this work.

% Using contracts, all components in a design clearly state their environment assumptions and the behaviors they guarantee (with behaviors expressed in formal languages). Progress has ocurred in the synthesis of specifications expressed in LTL using a library of components \cite{iannopollo16synthesis,iannopollo18decomposition}. The current approaches are based on the generation of candidate solutions using Counterexample Guided Inductive Synthesis (CEGIS) \cite{Solar-Lezama:2006:CSF:1168857.1168907}). This method is promising, but the scalability of synthesis remains elusive, as the CEGIS loop depends on calls to LTL model checking, a {\sc PSPACE}-complete problem \fixme{double check}. To tackle complexity, synthesis must be incremental, and we believe an incremental procedure may be within reach soon. We recently introduced the operation of quotient for assume-guarantee contracts in \cite{incerromeo18quotient}. Given a specification $\mathcal{C}$ to be implemented, and the specification $\mathcal{C}'$ of a component to be used in the design, the quotient tells us the \emph{amount of specification that needs to be implemented to meet $\mathcal{C}$}. Therefore, when synthesizing a specification given as an assume-guarantee contract, we can synthesize partial solutions of the design and keep the one whose quotient is the smallest. But in order to measure the size of the specification, we need a measure of LTL properties. Introducing that measure is the purpose of this work.

In this paper, we propose a measure for sets of infinite traces spanned by a set of atomic propositions. This measure has the semantics of the amount of trace space that a given property represents. While the measure does not require that properties be specified in LTL, we provide an implementation to compute the measure of bounded LTL formulas and the distance between two such formulas. Our implementation checks whether the measure can be computed compositionally; if not, it computes it using SAT model counting~\cite{biere2009handbook:chap20}. %Upon receiving a formula, our software verifies whether the measure can be computed compositionally; otherwise, a SAT model counter is used. We have found that our decomposition techniques are applicable to many common LTL properties.

In Section \ref{sc:measureIntuitive} we introduce the measure based on counting arguments, and we provide the algorithms for its computation for bounded LTL. We discuss our implementation and experiments in Section~\ref{sc:implementation}. Section~\ref{sc:measureAnalytical} extends our measure to handle infinite traces. In Section \ref{sc:measureAnalytical}, we also show how the counting arguments discussed in Section \ref{sc:measureIntuitive} are a special case of the definition of the measure given in Section \ref{sc:measureAnalytical}. We revisit existing literature in Section~\ref{sc:relWork} and conclude the paper in Section~\ref{sc:conclusion}.

\section{Measuring Properties by Counting Traces}
\label{sc:measureIntuitive}

In this section, we build a measure for bounded LTL formulas based on counting considerations. We argue that a measure that could be used to evaluate the ``size'' of a formula should tell the fraction of trace space represented by the property being measured. Algorithms are presented that compute this measure  compositionally. The counting ideas of this section are generalized to infinite traces in Section \ref{sc:measureAnalytical}.

Since an LTL formula denotes a property (i.e.,\ a set of traces), our purpose is to introduce a measure for sets of traces. Let $\nu$ be our measure and $\phi$ an LTL formula. At a high level, we wish the measure to have the following characteristics:
\begin{itemize}[wide, labelwidth=!]
    \item $\nu(\true) = 1$. The maximum value of the measure is 1 and is achieved when \textit{almost} all possible traces satisfy the formula. The meaning of \emph{almost} will be made clear later. It will coincide with the real-analytic notion of \emph{all traces up to a set of measure zero.}
    \item $\nu(\false) = 0$. The minimum value of the measure is 0 and is achieved when almost no traces satisfy the formula.
    \item A \textit{priori}, we state no preference for a trace over another, i.e., all traces that satisfy the formula should contribute equally to the value of the measure.
    %\item When a constraint occurs later in time, we weight that constraint by a smaller factor than those that are closer.
    \item The measure is computable in reasonable time.
\end{itemize}

These high-level requirements motivate us to define the measure as \emph{the fraction of the trace space that satisfies the formula}. Suppose that $\beta$ is a Boolean expression. When interpreted as an LTL formula, $\beta$ constrains only the first time step of all traces (i.e., $T = 0$). If, say, $\beta$ is a formula over $n$ atomic propositions, and it is true for $m$ possible combinations of the propositions, the measure of the LTL formula $\beta$ would be $\nu(\beta) = \frac{\textproc{Count}(\beta)}{2^n} = \frac{m}{2^n}$; indeed, the remaining $1 - \frac{m}{2^n}$ of traces begin with a combination of the atomic propositions which does not satisfy $\beta$. Note we assume the existence of a function \textproc{Count} which takes a Boolean expression and outputs the number of satisfying assignments for that expression; in other words, \textproc{Count} carries out SAT model-counting.

\begin{table}
    \centering
    \begin{tabular}{l | c c c c c}
         &  $T = 0$  & $T = 1$  & $T = 2$  & $T = 3$ & $\cdots$\\ \hline
    $x_0$  &  $x_0[0]$ & $x_0[1]$ & $x_0[2]$ & $x_0[3]$ & $\cdots$\\
    $x_1$  &  $x_1[0]$ & $x_1[1]$ & $x_1[2]$ & $x_1[3]$ & $\cdots$\\
    $\vdots$ & $\vdots$ &$\vdots$ &$\vdots$ & $\vdots$ &$\cdots$\\
    $x_{n-1}$  &  $x_{n-1}[0]$ & $x_{n-1}[1]$ & $x_{n-1}[2]$ & $x_{n-1}[3]$ & $\cdots$
    \end{tabular}
    \caption{If defined over $n$ Boolean atomic propositions, traces are uniquely determined by an assignment of values to all variables of the form $x_i[t]$ for all $i \in \{0, \ldots, n-1\}$ and $t \in \omega$.}
    \label{tb:variablesOverTime}
\end{table}

Assume that our LTL formulas are defined over a finite set $\Sigma$ of atomic propositions and that we consider finite traces up to $T = N$ ($N \in \mathbb{N}$). A trace is defined by an assignment of values to all atomic propositions for each time step. As a result, we can interpret an LTL formula as a Boolean expression involving $n$ variables for each time step, as shown in Table~\ref{tb:variablesOverTime}. As an example, suppose our formulas are defined over the propositions $a$ and $b$. Then the formula $\phi = a \land (b \lor X a)$ can be expressed as the Boolean formula $a[0] \land (b[0] \lor a[1])$. We assume we have access to a routine \textproc{AssignedVars} which accepts an LTL formula and returns a set of all atomic propositions that appear in the formula, together with the time step for which they appear in the formula. Applying this procedure to our example yields
$\textproc{AssignedVars}(\phi) = \{(a,0), (a,1), (b,0)\}$. Note that, using this notation, we interpret $(a,i)$ as a different variable from $(b,j)$ whenever $a \ne b$ or $i \ne j$. Let $\phi$ and $\phi'$ be LTL formulas, we also define the function $\textproc{VarUnion}(\phi, \phi') = \textproc{AssignedVars}(\phi) \cup \textproc{AssignedVars}(\phi')$, which returns all time-indexed variables referenced by both formulas, and $\textproc{VarInt}(\phi, \phi')$ for the intersection.
For any pair of LTL formulas, $\phi$ and $\phi'$, $\textproc{AssignedVars}(\phi \land \phi') = \textproc{AssignedVars}(\phi \lor \phi') = \textproc{VarUnion}(\phi, \phi')$.

We define the function $\textproc{VarTimes}$, which accepts an LTL formula and an atomic proposition, and returns the set of time indices for which that atomic proposition appears in the formula. Applying this function to our previous example gives $\textproc{VarTimes}(\phi, a) = \{0,1\}$ and $\textproc{VarTimes}(\phi, b) = \{0\}$.

Finally, we assume there is a bounded version  $\textproc{AssignedVars}(\phi, N)$ that contain the same information as the unbounded version of the function, but with all time indices larger than $N$ removed.

\subsection{Boolean operators}
We explore how the measure behaves with the Boolean operators. Let $\phi$ and $\phi'$ be LTL formulas.
\begin{enumerate}[wide, labelwidth=!]
    \item NOT. The traces that meet $\neg \phi$ are those that do not meet $\phi$. Thus,
\begin{equation}
% define not
\label{eq:notMeasure}
\nu(\neg \phi) = 1 - \nu(\phi).
\end{equation}
    \item AND. If a trace $\sigma$ satisfies $\sigma \models \phi \land \phi'$, then
    \begin{align} \label{eq:andreq1}&\sigma \models \phi \quad\text{and} \\ \label{eq:andreq2} &\sigma \models \phi'.\end{align}
    If $\phi$ and $\phi'$ do not share variables, i.e., they do not make assertions over the same atomic proposition at the same time steps, we observe that the requirement that a trace satisfies \eqref{eq:andreq1} is independent of the requirement that it satisfies \eqref{eq:andreq2}. Therefore, the fraction of traces satisfying $\phi \land \phi'$ is the product of the fraction of traces satisfying $\phi$ and the fraction of traces satisfying $\phi'$. On the contrary, if $\phi$ and $\phi'$ share variables, we invoke \textproc{Count}. Thus, the measure of conjunction is given by
    \begin{equation}
    \label{eq:andMeasure}
    \nu(\phi \land \phi') =
    \begin{cases}\nu(\phi) \nu(\phi'), &\text{if } \textproc{VarInt}(\phi, \phi') = \emptyset \\
    \frac{\textproc{Count}(\phi \land \phi')}{2^{|\textproc{VarUnion}(\phi, \phi')|}}, &\text{otherwise}\end{cases}.
    \end{equation}
    Calling \textproc{Count} may be prohibitively expensive if $\phi$ or $\phi'$ comprise a large number of variables. It may thus be useful to compute bounds for the measure of the composition. We can state the following bounds for the measure of an AND of any two LTL formulas:
    \begin{equation}
    \label{eq:andMeasureBounds}
    \nu(\phi) \nu(\phi') \le \nu(\phi \land \phi') \le \min \{\nu(\phi), \nu(\phi')\}.
    \end{equation}
    \item OR. By de Morgan's laws, we can derive the measure of the disjunction from the previous two operations.
    We have $\nu(\phi \lor \phi') = 1 - \nu(\neg \phi \land \neg \phi')$. Thus, the measure of the disjunction of two formulas is
    \begin{equation}
    \label{eq:orMeasure}
    \nu(\phi \lor \phi') =
    \begin{cases}1 - (1 - \nu(\phi))(1 - \nu(\phi')), &\text{if } \textproc{VarInt}(\phi, \phi') = \emptyset \\
    \frac{\textproc{Count}(\phi \lor \phi')}{2^{|\textproc{VarUnion}(\phi, \phi')|}}, &\text{otherwise}\end{cases}.
    \end{equation}
    The following bounds apply to an OR composition for all LTL formulas:
    \begin{equation}
        \begin{split}
    \label{eq:orMeasureBounds}
    \max \{\nu(\phi), \nu(\phi')\} \le \nu(\phi \lor \phi') \quad \text{and} \\
    \nu(\phi \lor \phi') \le 1 - (1 - \nu(\phi))(1 - \nu(\phi')).
        \end{split}
    \end{equation}
\end{enumerate}

Let \textproc{NotMeasure}, \textproc{AndMeasure} and \textproc{OrMeasure} implement, respectively, \eqref{eq:notMeasure}, \eqref{eq:andMeasure}, and \eqref{eq:orMeasure}. We observe that only when $\phi$ and $\phi'$ do not involve the same proposition at the same time steps, the measure of the compositions $\phi \land \phi'$ and $\phi \lor \phi'$ can be computed exactly, just using the measures of $\phi$ and of $\phi'$.
%But when the two formulas share variables, we can't reuse the values of their measures to compute exactly the measure of their composition.
Thus, when computing the measure of an LTL formula, it is advantageous to rewrite the formula to achieve maximum independence.

\begin{comment}
%\begin{figure}
%\centering
\begin{mdframed}
\begin{algorithmic}[1]
% NOT measure
\Function{NotMeasure}{$\phi$} \Comment{NOT}
\State \Return $1 - \nu(\phi)$
\EndFunction
% AND measure
\Function{AndMeasure}{$\phi, \phi'$} \Comment{AND}
\If {$\textproc{VarInt}(\phi, \phi') = \emptyset$}
\State \Return $\nu(\phi)  \nu(\phi')$
\Else
\State NumAssertions $\gets \textproc{Count}(\phi \land \phi')$
\State NumVars $\gets |\textproc{VarUnion}(\phi, \phi')|$
\State \Return $\text{NumAssertions}/ 2^{\text{NumVars}}$
\EndIf
\EndFunction
% OR measure
\Function{OrMeasure}{$\phi, \phi'$} \Comment{OR}
\If {$\textproc{VarInt}(\phi, \phi') = \emptyset$}
\State \Return $1 - (1 -\nu(\phi))(1- \nu(\phi'))$
\Else
\State NumAssertions $\gets \textproc{Count}(\phi \lor \phi')$
\State NumVars $\gets |\textproc{VarUnion}(\phi, \phi')|$
\State \Return $\text{NumAssertions}/ 2^{\text{NumVars}}$
\EndIf
\EndFunction
\end{algorithmic}
\end{mdframed}
%\caption{Routines for computing the measures of applications of Boolean operators on LTL formulas.}
%\label{alg:booleanComposition}
%\end{figure}
\end{comment}

\paragraph{Example.} Let $\Sigma = \{a, b, c\}$. Let $\phi = a \land \X b$,
$\phi' = b \lor \X a$, and $\phi'' = a \land \neg c$. Consider the formula $\psi = \phi \land \phi' \land \phi''$. Due to considerations of independence,
$\nu(\psi)  =  \nu(\phi')  \nu(\phi \land \phi'')$ since $\phi'$ shares no variables with $\phi$ or with $\phi''$. In this case, the routine \textproc{AndMeasure} would call \textproc{Count} on $\phi \land \phi''$, which makes statements over 3 variables.
Moreover,
$\nu(\psi) \ne \nu(\phi)  \nu(\phi' \land \phi'')$ since $\phi$ shares variables with $\phi' \land \phi''$. Thus, in this case, \textproc{AndMeasure} would call \textproc{Count} on the entire $\psi$, which contains 5 variables.

\begin{comment}
\subsubsection{Dealing with dependence}

Calling \textproc{Count} on a large number of variables may be prohibitively expensive. In this case, it may be sufficient to provide a bound for the measure of an expression. We already determined the bounds respected by the measures of Boolean compositions. Equations \eqref{eq:andMeasureBounds} and \eqref{eq:orMeasureBounds} can be used to retrieve bounds for the composition of LTL formulas.
\end{comment}

\begin{comment}
\begin{figure}
\centering
\begin{mdframed}
\begin{algorithmic}[1]
% AND measure approximation
\Function{AndMeasureBounds}{$\phi, \phi'$} \Comment{AND}
\State LowerBound $\gets \|\phi\|  \|\phi'\|$
\If {$\textproc{VarInt}(\phi, \phi') = \emptyset$}
\State UpperBound $\gets$ LowerBound
\Else
\State UpperBound $\gets \min\{\|\phi\|, \|\phi'\|\}$
\EndIf
\State \Return (LowerBound, UpperBound)
\EndFunction
% OR measure approximation
\Function{OrMeasureBounds}{$\phi, \phi'$} \Comment{OR}
\State UpperBound $\gets 1 - (1 -\|\phi\|)(1- \|\phi'\|)$
\If {$\textproc{VarInt}(\phi, \phi') = \emptyset$}
\State LowerBound $\gets$ UpperBound
\Else
\State LowerBound $\gets \max\{\|\phi\|, \|\phi'\|\}$
\EndIf
\State \Return (LowerBound, UpperBound)
\EndFunction
\end{algorithmic}
\end{mdframed}
\caption{Routines for computing bounds for the measures of applications of Boolean operators on LTL formulas.}
\label{alg:booleanCompositionApp}
\end{figure}
\end{comment}

\subsection{Temporal operators}
\label{sc:measureIntuitiveTemporal}

In principle, the grammar of LTL is the Boolean grammar plus the symbols $\X$ and $\U$. Other temporal operators can be expressed in terms of this grammar. For instance, $\F \phi = \true \,\U\, \phi$ and $\G = \neg \F \neg \phi$. If a bounded semantics is used, we can express $\U$ in terms of $\X$: $\phi \,\U\, \psi = \bigvee_{j = 0}^N \left(\bigwedge_{i = 0}^{j - 1} \X^i \phi\right) \land \X^j \psi$ for $N > 0$ and $\phi \,\U\, \psi = \psi$ for $N = 0$. Note that the effect of $\X$ on a formula consists in adding 1 to all temporal indices of the atomic propositions that appear in the formulas; this is the case because $\X$ commutes with all Boolean (and temporal) operators. As a result, for any LTL formula, we have $\textproc{AssignedVars}(\X \phi) = \{(x, T+1) \,|\, (x,T) \in \textproc{AssignedVars}(\phi) \}$.

We can apply the expressions we obtained for the Boolean formulas to define the relation of the measure and the temporal operators. We will assume the temporal operators are bounded to $T = N$.

\begin{enumerate}[wide, labelwidth=!]
    \item $\X$. If no temporal bound is enforced, applying a next step operator to a formula results in the same measure since the effect of this operator is to rename all variables without causing collisions. We thus have $\nu(\X \phi) = \nu(\phi)$. On the other hand, if a bound \emph{is} enforced, we must provide an interpretation for the case of atomic propositions leaving the bound. In coherence with \cite{de2013linear}, for an atomic proposition $a$ and a time bound $N$, we provide the following semantics for the bounded $\X$ operator: %removed reference: de2015synthesis
    \begin{align*}
    \X^i a = \begin{cases} \X_{\text{LTL}}^i a, &i \le N \\ \false, & \text{otherwise} \end{cases},
    \end{align*}
    where $\X^i_{\text{LTL}}$ is the LTL next operator. For instance, consider the formula $\phi = \X a$ and suppose we impose a time bound $T = 0$. Then no trace can satisfy this formula in our bounded semantics, i.e., the formula is transformed into the empty property. On the other hand, the formula $\phi = \X \neg a$ is transformed into the entire trace space (i.e., all traces satisfy this property).
    \item $\U$. The expansion of the until operator produces several terms with shared dependencies. However, it can be shown that $\phi \,\U\, \psi = \theta_N$, where $\theta_N$ is defined recursively as follows:
    \begin{align}
    \theta_0 = \psi \quad\text{and}\quad \theta_i = \psi \, \lor \, \phi \land \X \theta_{i - 1}.
    \end{align}
    This expansion factors out the applications of the $\X$ operator. 
    In order to compute $\nu\left(\phi \, \U\, \psi\right)$ compositionally, we need two conditions: 
    $\phi$ and $\psi$ must be independent, and their atomic propositions need to be qualified at most at one time step (e.g., $\psi' = b \lor a \land \X a$ does not pass this test because $a$ is qualified by two different time steps). 
    Under the condition of independence, $\nu(\phi \,\U\, \psi) = \gamma_N$, where
    \begin{align}
        %\begin{align}
    \label{eq:untilRecursion}
    \gamma_0 &= \nu(\X^{N} \psi) \quad\text{and} \\
    \nonumber
    \gamma_{i} &= 1 - \left(1 - \nu(\X^{N - i} \psi)\right) \left(1 - \gamma_{i - 1} \cdot \nu(\X^{N - i} \phi)\right).
        %\end{align}
    \end{align}
\end{enumerate}

A routine for computing the measure of formulas with the $\U$ operator is given below. 
Analogous routines for the remaining temporal operators are special cases of this routine.

%\begin{figure}
%\centering
\begin{mdframed}
\begin{algorithmic}[1]
% NEXT measure
\begin{comment}
\Function{NextMeasure}{$\phi$} \Comment{$\X$}
\State \Return $\|\phi\|$
\EndFunction
\end{comment}
% UNTIL measure
\Function{UntilMeasure}{$\phi$, $\psi$, $N$} \Comment{$\U$}
\If {$N = 0$}
\State \Return $\nu(\psi)$
\ElsIf {$\textproc{VarInt}(\phi, \psi) = \emptyset$ and $\forall a \in \Sigma.$ \\
\pushcode[0] $\quad |\textproc{VarTimes}(\phi,a, N)| \le 1$ and \\
\pushcode[0] $\quad |\textproc{VarTimes}(\psi,a, N)| \le 1$}
\State $\gamma \gets \nu(\X^N \psi)$
\For{$i \gets 1, N$}
\State $\alpha \gets \left(1 - \nu(\X^{N - i} \psi)\right)$
\State $\gamma \gets 1 - \alpha \cdot \left(1 - \gamma \cdot \nu(\X^{N - i} \phi)\right)$
\EndFor
\State \Return $\gamma$
\Else
\State $\phi' \gets \bigvee_{j = 0}^N \left(\bigwedge_{i = 0}^{j - 1} \X^i \phi\right) \land \X^j \psi$
\State NumAssertions $\gets \textproc{Count}\left(\phi'\right)$
\State NumVars $\gets |\textproc{AssignedVars}(\phi'; N)|$
\State \Return $\text{NumAssertions} / 2^{\text{NumVars}}$
\EndIf
\EndFunction
\begin{comment}
% GLOBALLY measure
\Function{GloballyMeasure}{$\phi$, $N$} \Comment{$\G$}
\If {$\forall a \in \Sigma.\,\,|\textproc{VarTimes}(\phi,a, N)| \le 1 $}
\State \Return $\prod_{i = 0}^N \nu(\X^i \phi)$
\Else
\State NumAssertions $\gets \textproc{Count}(\bigwedge_{i = 0}^N \X^i \phi)$
\State NumVars $\gets |\textproc{VarUnion}(\phi, \ldots, \X^N \phi; n)|$
\State \Return $\text{NumAssertions} / 2^{\text{NumVars}}$
\EndIf
\EndFunction
% EVENTUALLY measure
\Function{EventuallyMeasure}{$\phi$, $N$} \Comment{$\F$}
\If {$\forall a \in \Sigma.\,\,|\textproc{VarTimes}(\phi,a, N)| \le 1 $}
\State \Return $1 - \prod_{i = 0}^N (1 - \nu(\X^i \phi))$
\Else
\State NumAssertions $\gets \textproc{Count}(\bigvee_{i = 0}^N \X^i \phi)$
\State NumVars $\gets |\textproc{VarUnion}(\phi, \ldots, \X^N \phi; n)|$
\State \Return $\text{NumAssertions} / 2^{\text{NumVars}}$
\EndIf
\EndFunction
\end{comment}
% Liveness measure
\begin{comment}
\Function{LivenessMeasure}{$\phi$} \Comment{Liveness}
\State \Return $\|\phi\|$
\EndFunction
\end{comment}
\end{algorithmic}
\end{mdframed}
%\caption{Routines for computing the measures of applications of temporal operators on LTL formulas.}
%\label{alg:temporalOps}
%\end{figure}

We observe that the given recursion \eqref{eq:untilRecursion} has a fixed point, which allows us to provide the measure for the unbounded case when $\phi$ and $\psi$ do not share variables:
\begin{align}
    \nu(\phi \, \U \, \psi) = \frac{\nu(\psi)}{1 - \nu(\phi)\left(1 - \nu(\psi)\right)}.
\end{align}

\begin{comment}
\begin{figure}
\centering
\begin{mdframed}
\begin{algorithmic}[1]
% AND measure
\Function{GloballyMeasureBounds}{$\phi$} \Comment{$\G$}
\State LowerBound $\gets \|\phi\|^{n+1}$
\If {$\forall a \in \Sigma.\,\,|\textproc{VarTimes}(\phi,a, n)| \le 1 $}
\State UpperBound $\gets$ LowerBound
\Else
\State UpperBound $\gets \|\phi\|$
\EndIf
\State \Return (LowerBound, UpperBound)
\EndFunction
% OR measure
\Function{EventuallyMeasureBounds}{$\phi$} \Comment{$\F$}
\State UpperBound $\gets 1 - ( 1- \|\phi\|)^{n+1}$
\If {$\forall a \in \Sigma.\,\,|\textproc{VarTimes}(\phi,a, n)| \le 1 $}
\State LowerBound $\gets$ UpperBound
\Else
\State LowerBound $\gets \|\phi\|$
\EndIf
\State \Return (LowerBound, UpperBound)
\EndFunction
\end{algorithmic}
\end{mdframed}
\caption{Routines for computing the bounds of the measures of applications of temporal operators on LTL formulas.}
\label{alg:temporalOpsApp}
\end{figure}
\end{comment}

\section{Practical Evaluation}
\label{sc:implementation}

% \subsection{Dealing with loss of accuracy}

% Looking at equations \eqref{eq:andMeasure}, \eqref{eq:orMeasure}, \eqref{eq:globallyMeasure}, \eqref{eq:eventuallyMeasure}, we observe that computing the measures of the application of operators may result in measures with poor accuracy. Consider, for instance, the case of $\G$: when considering bounded intervals, the resulting measure raises the measure of the input formula to the time bound \eqref{eq:globallyMeasure}. In order to better compute the measures whose values are close to 0 or 1, we manipulate the logarithms of the resulting measures, as opposed to their actual values.

% The resulting equations using logarithms are obvious for \eqref{eq:andMeasure} and \eqref{eq:globallyMeasure}, but we need to place further restrictions to compute the logarithms of the measures for the OR operation and for eventually. For instance, if $\phi$ and $\phi'$ are small, from \eqref{eq:orMeasure} we have
% \begin{equation}
%     \log \|\phi \lor \phi'\| \approx -(1-\|\phi\|)(1 - \|\phi'\|)
% \end{equation}
% and under the condition that $\|\phi\|$ is close to 1, from \eqref{eq:eventuallyMeasure} we have
% \begin{equation}
%     \log \|\F \phi \| \approx -(1-\|\phi\|)^{n+1}.
% \end{equation}
We have developed a Python implementation\footnote{\url{https://github.com/icyphy/spec-space}} of the algorithms provided in Section~\ref{sc:measureIntuitive}. Our tool parses LTL formulas and measures them; if a pair of properties is given, it computes their respective distance. After turning the input expression into an abstract syntax tree, we traverse the tree (depth-first, post-order) and annotate visited nodes with dependency information. A subsequent pre-order traversal recursively breaks down the computation of the measure into smaller parts, until a bifurcating node (i.e., $\land$, $\lor$, or $\U$) of which its subtrees
are considered interdependent (see Section~\ref{sc:measureIntuitive}) and thus cannot be measured separately. 
%is reached of which one or more atomic propositions shared across the node's children coincide at some time step; the subtrees are thus found to be interdependent and cannot be 
In that case, we invoke \textproc{Count} on an expanded expression obtained from the node with interdependent subtrees, which is carried out by \texttt{sharpSAT}, a DPLL-style \#SAT solver developed by Marc Thurley~\cite{thurley06sharpsat}. The expansion takes care of transforming all temporal operators into Boolean expressions with fresh variables for each time index up to the given time bound.

\subsection{Examples}
Specifications are usually composed out of idioms or templates that denote well-known classes of properties such as safety, liveness, correlation, precedence, and response. Each of these classes are thought to be semantically distinct, and thus should occupy a different fraction of the trace space; is this somehow reflected in their measure? Our results, plotted in Fig.~\ref{fg:mExamples}, confirm this intuition. The measures of these idiomatic LTL expressions are spread remarkably evenly.
\begin{figure}[ht]
    \centering
        {\begin{tikzpicture}
\begin{axis}[
    xlabel=Time bound (N),
    ylabel=$\nu(\phi)$, %ymode=log, 
    xtick={0,1, 2, 3, 4, 5},
    legend style={at={(1.1,1)},anchor=north west,font=\tiny}]

%a ^ X b
\addplot[color=red,mark=diamond,mark options={solid}] coordinates {
    (0, 0)
    (1, 0.25)
    (2, 0.25)
    (3, 0.25)
    (4, 0.25)
    (5, 0.25)
};

% G(a)
\addplot[smooth, color=orange,mark=triangle,mark options={solid}] coordinates {
    (0, 0.5)
    (1, 0.25)
    (2, 0.125)
    (3, 0.0625)
    (4, 0.03125)
    (5, 0.015625)
};

%GF(a)
\addplot[smooth,color=brown,mark=star,mark options={solid}] coordinates {
    (0, 0.5)
    (1, 0.5)
    (2, 0.5)
    (3, 0.5)
    (4, 0.5)
    (5, 0.5)
};

% Fa -> Fb
\addplot[smooth,color=green,mark=square,mark options={solid}] coordinates {
    (0, 0.75)
    (1, 0.8125)
    (2, 0.890625)
    (3, 0.94140625)
    (4, 0.9697265625)
    (5, 0.984619140625)
};

%a -> F b
\addplot[smooth,color=cyan,mark=+,mark options={solid}] coordinates {
    (0, 0.75)
    (1, 0.875)
    (2, 0.9375)
    (3, 0.96875)
    (4, 0.984375)
    (5, 0.9921875)
};

%G(a -> F b)
\addplot[smooth, color=blue,mark=oplus,mark options={solid}] coordinates {
    (0, 0.75)
    (1, 0.6875)
    (2, 0.671875)
    (3, 0.66796875)
    (4, 0.6669921875)
    (5, 0.666748046875)
};

%a -> b U c
\addplot[smooth,color=purple,mark=x,mark options={solid}] coordinates {
    (0, 0.8125)
    (1, 0.828125)
    (2, 0.83203125)
    (3, 0.8330078125)
    (4, 0.833251953125)
    (5, 0.83331298828125)
};

%G(a -> (b U c))
\addplot[smooth, color=violet,mark=otimes,mark options={solid}] coordinates {
    (0, 0.8125)
    (1, 0.71875)
    (2, 0.69921875)
    (3, 0.68505859375)
    (4, 0.67645263671875)
    (5, 0.6717147827148438)
};

% \addplot[color=blue,mark=*] 
%     table[x=Cost,y=Error] {pgfplots.testtable};

\legend{$a \land \X b$, $\G a$, $\G\F a / \F\G a$, $\F a \rightarrow \F b$, $a \rightarrow \F b$, $\G(a \rightarrow \F b)$, $a \rightarrow b \U c$, $\G(a \rightarrow b \U c)$}
\end{axis}
\end{tikzpicture}}
        \caption{Measures up to $N=5$ for ($\Diamond$) initialization, ($\triangle$) safety, ($\star$) liveness/stability, ($\square$) correlation, ($+$, $\oplus$) response, and ($\times$, $\otimes$) precedence.}
        \label{fg:mExamples}
\end{figure}
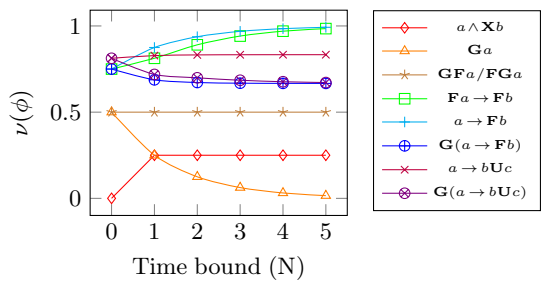
\subsection{Tractability}
Because model counting is a \textsc{\#P}-complete task~~\cite{biere2009handbook:chap20}, unless $P=NP$ there exists no polynomial-time algorithm to implement \textproc{Count}. To make computing our measure more tractable, we employ a divide-and-conquer approach, breaking down the problem into smaller subproblems, and invoke \textproc{Count} only for subexpressions that cannot be broken down further due to interdependencies between variables.
\begin{figure}[ht]
    \centering
    \begin{subfigure}{0.45\columnwidth}
        \resizebox{1.0\columnwidth}{!}
        {\begin{tikzpicture}
\begin{axis}[
    xlabel=Time bound (N),
    ylabel=Execution time ($s$), ymode=log, 
    xtick={0,1, 2, 3, 4, 5},
    legend style={at={(0,1)},anchor=north west,font=\tiny}]

\addplot[smooth, color=green,mark=square,mark options={solid}] coordinates {
    (0,    0.09)
    (1,    0.09)
    (2,   0.10)
    (3,   0.10)
    (4,  0.10)
    (5,  0.11)
};
\addplot[smooth, color=cyan,mark=|,mark options={solid}] coordinates {
    (0,    0.09)
    (1,    0.09)
    (2,   0.10)
    (3,   0.10)
    (4,  0.18)
    (5,  1.02)

};
\addplot[smooth,color=blue,mark=oplus,mark options={solid}] coordinates {
    (0,    0.09)
    (1,    0.09)
    (2,   0.11)
    (3,   0.19)
    (4,  1.73)
    (5,  73.08)
};
\addplot[smooth,color=purple,mark=x,mark options={solid}] coordinates {
    (0,    0.09)
    (1,    0.09)
    (2,   0.12)
    (3,   0.67)
    (4,  36.18)
    (5,  6283.81)
};

% \addplot[color=blue,mark=*] 
%     table[x=Cost,y=Error] {pgfplots.testtable};

\legend{$M=1$, $M=2$, $M=3$, $M=4$}
\end{axis}
\end{tikzpicture}}
        \caption{}
        \label{fg:compExamplea}
    \end{subfigure}
        \begin{subfigure}{0.45\columnwidth}
        %\vspace{-.25em}
        \resizebox{0.93\columnwidth}{!}
        {\begin{tikzpicture}
\begin{axis}[
    xlabel=Time bound (N),
    ylabel=Execution time ($s$), %ymode=log, 
    ymax=60,
    ytick={0,20, 40},
    xtick={0,10,20,30,40,50},
    legend style={at={(0,1)},anchor=north west,font=\tiny}]

\addplot[smooth, color=green,mark=square,mark options={solid},dotted] coordinates {
(0, 0.33)(10, 1.45)(20, 2.74)(30, 3.78)(40, 4.87)(50, 6.27)
};

\addplot[smooth, color=cyan,mark=|,mark options={solid},dotted] coordinates {
(0, 0.73)(10, 3.20)(20, 6.02)(30, 8.40)(40, 10.90)(50, 13.53)
};

\addplot[smooth, color=blue,mark=oplus,mark options={solid},dotted] coordinates {
(0, 1.41)(10, 6.61)(20, 11.77)(30, 14.84)(40, 19.00)(50, 23.13)
};

\addplot[smooth, color=purple,mark=x,mark options={solid},dotted] coordinates {
(0, 2.06)(10, 8.49)(20, 14.50)(30, 21.09)(40, 27.79)(50, 34.75)
};

\legend{$M=500$, $M=1000$, $M=1500$, $M=2000$}
\end{axis}
\end{tikzpicture}}
        \caption{}
        \label{fg:compExampleb}
    \end{subfigure}
\caption{Computing the measure of $r \rightarrow \F(\bigwedge_{i = 1}^M (x_i \leftrightarrow x_{i+1}))$. (a) Without exploiting independence, the calculation becomes intractable even for small $N$. (b) Computing this measure compositionally scales linearly with $M$ and $N$.}
\label{fig:myfigs}
\end{figure}
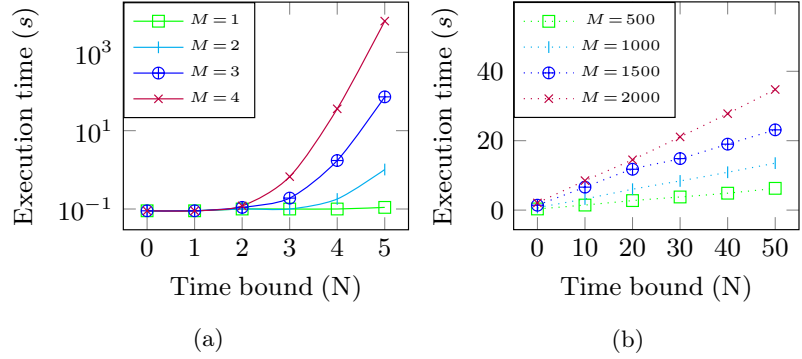
To illustrate the efficacy of this method, let us consider the formula $r \rightarrow \F(\psi)$, where $\psi = \bigwedge_{i = 1}^M (x_i \leftrightarrow x_{i+1})$. The execution time of computing $\nu(r \rightarrow \F(\psi))$ is shown in Figure~\ref{fig:myfigs} for various $M$ and $N$. Because $\psi$ is chain of clauses, each of which shares a variable with the following, all clauses are interdependent. To make matters worse, the expansion of $\F$ yields a disjunction of $N$ time-shifted versions of $\psi$. Applying \textproc{Count} to the resulting Boolean expression is clearly intractable for large $N$, as the number of clauses grows exponentially with $N$. Our algorithm, on the other hand, only has to invoke \textproc{Count} on expressions with $2M$ clauses that jointly comprise a total of $M+1$ distinct variables, and do so $N+1$ times. Because the expansion of $\F$ yields a disjunction of logically equivalent subexpressions, to calculate the measure of $r \rightarrow \F(\psi)$, we invoke \textproc{Count} only once and reuse the result.

\section{Analytical Viewpoint}
\label{sc:measureAnalytical}

In Section \ref{sc:measureIntuitive}, we discussed the measure based on counting considerations. In order to count the satisfying assignments for LTL formulas, we imposed a bound $T = N$ on them. In this section we extend the measure to handle \emph{any} set of infinite traces and is, in fact, not limited to those expressible in LTL. First, we consider a map from the set of all traces in a finite number of atomic propositions to the unit interval $I = [0,1]$. With this map in place, we can interpret any set of traces as a subset of $I$. Thus, we are able to regard properties as subsets of $\reals$. We then introduce the property measure as an integral and provide a calculus for computing it. %As opposed to the operations for measure composition given in Section \ref{sc:measureIntuitive}, the rules provided in this section are based on additions (instead of multiplications) and have the advantage of prescinding of model counters.

\subsection{Traces as reals}

Suppose $a$ is an atomic proposition. A trace $\sigma$ over $a$ will be given by
$\sigma = a[0], a[1], a[2], \ldots$
Consider the map $\|\cdot\|$ on traces given as follows:
\begin{align}
\label{eq:traceEvaluation}
\|\sigma\| = \sum_{i = 0}^\infty a[i] 2^{-(i+1)}.
\end{align}
Syntactically, this corresponds to the binary number ``0.a[0]a[1]a[2]\ldots,'' i.e., $\|\cdot\|$ prepends the string ``0.'' to the trace $\sigma$, and removes all commas from it. The inverse operation removes the ``0.'' and adds commas between the binary values of the number. As an example, consider the trace $\sigma = 1,0,1,1,1,0,1,0,0,1,\ldots$ The map gives us $\|\sigma\| = 0.1011101001\ldots$.

Let $\Sigma$ be the set of atomic propositions. We assume $\Sigma$ is finite. Then the set of all traces is given by $\fct = \left(2^\Sigma\right)^\omega$, and thus the set of all properties is $\fcp = 2^\fct$. We observe that equation \eqref{eq:traceEvaluation} is a map $\|\cdot\|: \fct \rightarrow I$. This map is clearly surjective, but it is not injective. To see this, consider the following cases: suppose we only have one atomic proposition $a$ and suppose we have the trace $\sigma = a, \neg a, \neg a, \neg a, \ldots$ (i.e., the only trace that satisfies the property $\phi = a \land \X \G \neg a$). Then $\|\sigma\| = 0.1 = 2^{-1}$. Now suppose that $\sigma' = \neg a, a, a, a, \ldots$ (i.e., the only trace that satisfies the property $\phi' = \neg a \land \X \G a$. Then $\|\sigma'\| = \sum_{i = 2}^\infty 2^{-i} = 2^{-1}$. So the real numbers $0.0111...$ and $0.1000...$ are the same. But we think of traces $\sigma$ and $\sigma'$ as being quite different. Indeed, we can build state machines to recognize one and not the other. This shows that $\|\cdot\|$ is not injective.

When $\Sigma$ contains $n$ atomic propositions instead of 1, traces are given in the form shown in Table~\ref{tb:variablesOverTime}. We observe that we can \emph{linearize} the trace to
\[
x_0[0], x_1[0], \ldots, x_{n-1}[0], x_0[1], \ldots, x_{n-1}[1],
x_0[2] \ldots
\]
This interpretation allows us to extend $\|\cdot\|$ to a map from sets of infinite traces over any finite set of atomic propositions to the unit interval.

\subsection{Properties as subsets of $\reals$}

Let $\phi$ be a property over $\Sigma$. We define a map $f: \fcp \rightarrow 2^I$ from properties to subsets of the reals given by
$f \phi = \{ \|\sigma\| : \sigma \in \phi\}$. We observe that $f$ allows us to interpret $\phi$ as subset of $\reals$. We call $f\phi$ the \emph{property intervals} of $\phi$.

\subsection{A measure for properties}

Now that we can interpret properties as subsets of the reals, we proceed to define the property measure and show that it meets the requirements of a measure. We also show that it absorbs as a special case the measure described in Section \ref{sc:measureIntuitive}.
\begin{definition}
The property measure $\nu: \fcp \rightarrow I$ is given by
\begin{align}
\label{eq:propMeasure}
\nu = \mu \circ f,
\end{align}
where $\mu$ is the Lebesgue measure.
\end{definition}

\begin{theorem}
$\nu$ is a well-defined measure.
\end{theorem}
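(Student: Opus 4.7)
The plan is to verify the axioms of a (probability) measure one at a time, the one subtle point being that $\|\cdot\|:\fct\to I$ is not injective, so $f$ does not a priori carry disjoint properties to disjoint subsets of $I$. First I would fix the $\sigma$-algebra $\mathcal{A}$ on $\fcp$ generated by the \emph{cylinders} (properties that constrain only finitely many atomic propositions at finitely many time indices). On a cylinder, $f$ produces a finite disjoint union of dyadic subintervals of $I$, which is Borel; a standard monotone-class argument extends this to show that $f\phi$ is Lebesgue measurable for every $\phi\in\mathcal{A}$. Nonnegativity and $\nu(\emptyset)=\mu(\emptyset)=0$ are then immediate, and $\nu(\fct)=1$ because $\|\cdot\|$ is surjective onto $I$.

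Second, I would localize the non-injectivity. After the column-wise linearization of traces into a single binary sequence, two distinct sequences can agree under $\|\cdot\|$ only when one has tail $10000\ldots$ and the other has tail $01111\ldots$, in which case their common image is a dyadic rational. Let $D\subseteq I$ denote this countable set; $\mu(D)=0$. The consequence I need is: whenever $\phi\cap\phi'=\emptyset$, the overlap $f\phi\cap f\phi'$ is contained in $D$, because any common value would arise from two distinct traces with coinciding binary expansions.

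The main step, and the only nontrivial one, is countable additivity. Given pairwise disjoint $\{\phi_i\}_{i\in\mathbb{N}}\subseteq\mathcal{A}$, I have $f(\bigcup_i\phi_i)=\bigcup_i f\phi_i$ as sets, but the sets $f\phi_i$ are only \emph{almost disjoint}: by the previous paragraph all their pairwise intersections lie in $D$. Setting $A_i := f\phi_i\setminus D$, the $A_i$ are pairwise disjoint Lebesgue sets with $\mu(A_i)=\mu(f\phi_i)$, and
\[
\bigcup_i A_i \;\subseteq\; \bigcup_i f\phi_i \;\subseteq\; \Bigl(\bigcup_i A_i\Bigr) \cup D,
\]
so monotonicity of $\mu$, countable additivity on the disjoint family $\{A_i\}$, and $\mu(D)=0$ together yield
\[
\nu\Bigl(\bigcup_i\phi_i\Bigr) \;=\; \mu\Bigl(\bigcup_i f\phi_i\Bigr) \;=\; \sum_i\mu(A_i) \;=\; \sum_i\nu(\phi_i).
\]
The hard part is exactly this bookkeeping around $D$; once the dyadic-rational overlaps are absorbed into a null set, the theorem reduces cleanly to countable additivity of the Lebesgue measure.
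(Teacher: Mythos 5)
Your proposal is correct, and its core idea is the same as the paper's: the failure of injectivity of $\|\cdot\|$ is confined to reals with two binary expansions, a countable (hence Lebesgue-null) set, so disjoint properties map to almost-disjoint subsets of $I$ and additivity of $\mu$ transfers to $\nu$. The paper carries this out by intersecting with the rationals $I_q$ rather than your sharper set $D$ of dyadic rationals, which is an immaterial difference. Where you genuinely diverge is in rigor, and in both places your version is stronger. First, the paper proves additivity only for two disjoint properties and then asserts that ``countable additivity follows by induction''; induction yields only finite additivity, whereas your sandwich $\bigcup_i A_i \subseteq \bigcup_i f\phi_i \subseteq (\bigcup_i A_i)\cup D$ with $A_i = f\phi_i\setminus D$ handles a genuinely countable disjoint family in one stroke and is the argument the theorem actually needs. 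Second, the paper defines $\nu$ on all of $\fcp = 2^{\fct}$ and applies $\mu$ to $f\phi$ without ever checking that $f\phi$ is Lebesgue measurable, which fails for arbitrary $\phi$; your restriction to the $\sigma$-algebra generated by cylinders, plus the observation that $f$ sends a cylinder to a finite union of dyadic intervals, addresses this honestly (the only step worth spelling out in your monotone-class sketch is closure under complements, which again uses that $f(\phi^c)$ and $(f\phi)^c$ differ only within $D$). So read your proof as the same route as the paper's, but with the two gaps in the paper's write-up filled.
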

\begin{proof}
Since $\mu$ is nonnegative, $\nu$ is nonnegative. Moreover, we have $\nu(\emptyset) = \mu \circ f (\emptyset) = \mu(\emptyset) = 0$. Finally, we have to verify countable additivity. Let $\phi, \phi' \in \fcp$ such that $\phi \cap \phi' = \emptyset$. We observe that irrational numbers have a unique binary representation. Therefore, if $\|\cdot\|$ maps more than one trace to the same real number, this number must be rational. We compute the measure:
$
    \nu(\phi \cup \phi') =  \mu \circ f(\phi \cup \phi') =
    \mu \left( \left(f(\phi) \cup f(\phi')\right) \cap \left( I_q \cup I_{q}^c \right) \right),
$
where $I_q$ and $I_q^c$ are respectively the rational and irrational numbers in $I$. The measure of $\phi \cup \phi'$ becomes
\begin{align*}
    &\mu \left( f(\phi) \cap I_q^c \cup f(\phi') \cap I_q^c \cup \left( f(\phi) \cup f(\phi')\right) \cap I_q \right) = \\
    &\mu \left( f(\phi) \cap I_q^c \right) +
    \mu \left( f(\phi') \cap I_q^c \right) \\ &+
    \mu \left( \left(f(\phi) \cup f(\phi')\right) \cap I_q \right).
\end{align*}
Since the Lebesgue measure of a countable set is zero, we have
\begin{align*}
\nu(&\phi \cup \phi') = \mu \left( f(\phi) \cap I_q^c \right) + \mu \left( f(\phi') \cap I_q^c \right) \\ &=
\mu \left( f(\phi) \cap I_q^c \right) + \mu \left( f(\phi') \cap I_q^c \right) \\ &+
\mu \left( f(\phi) \cap I_q \right) + \mu \left( f(\phi') \cap I_q \right) \\ &=
\mu \left( f(\phi) \right) + \mu \left( f(\phi') \right) =
\nu(\phi) + \nu(\phi').
\end{align*}
Countable additivity follows by induction.
\end{proof}

Equation \eqref{eq:propMeasure} can be rewritten as $\nu(\phi) = \int_{f\phi} d\mu$. This expression allows us to easily generate new measures that weight traces unequally. Suppose $g$ is a measurable function from $I$ to the nonnegative reals with $\int_{x \in I} g(x) dx = 1$. Then $g$ can be used to weight traces in order to attribute different importance to different traces. The weighted measure is given by
\begin{align}
\label{eq:propWeightedMeasure}
\nu(\phi) = \int_{f\phi} g\, d \mu.
\end{align}
We leave further research of weighted metrics for future work. Now we proceed to construct a metric using the measure, but first we show that this notion of the measure generalizes the discussion of Section \ref{sc:measureIntuitive}.

\paragraph{Bounded case.} Suppose $\phi$ is an LTL formula defined for a bound $T = N$ over a finite set $\Sigma$ of atomic propositions. While introducing the measure $\nu$ through counting considerations in Section \ref{sc:measureIntuitive}, we first interpreted the given formula using the bounded semantics described in Section \ref{sc:measureIntuitiveTemporal}, thus generating a new formula $\phi'$ which makes statements only over the first $N + 1$ time steps of the traces. Suppose $\phi'$ has $m$ satisfying assignments $\sigma_1, \ldots,\, \sigma_m$, where the $\sigma_i$ are traces of length $N+1$. Thus, we can write $\phi' = (\sigma_i)_{i = 1}^m$. Let $\phi''$ be a formula syntactically equal to $\phi'$, but interpreted over infinite traces. Since $\phi'$ only makes statements over the first $N+1$ time steps, any infinite extension of $\sigma_i$ satisfies $\phi''$. Let $\sigma_i' = \{\gamma \,|\, \gamma \in \fct \text{ and } \gamma[0\ldots N] = \sigma_i\}$; that is, $\sigma_i'$ contains all infinite extensions of the bounded trace $\sigma_i$. Then, $\gamma \models \phi''$ for all $\gamma \in \sigma_i'$ for $1 \le i \le m$. Moreover, in this case we have $\phi'' = \cup_{i = 1}^m \sigma_i'$. The $\sigma_i'$ are disjoint, so $\nu(\phi'') = \sum_{i = 1}^m \nu(\sigma_i') = m \,2^{-(N+1)|\Sigma|}$, which matches the counting interpretation of the measure discussed in Section \ref{sc:measureIntuitive}.

\subsection{A metric for properties}

For any two sets $X$ and $Y$, we let their symmetric difference be given by $X \sym Y = (X - Y) \cup (Y - X)$. We now define a function that we use to build a metric.

\begin{definition}
    Let $\phi, \phi' \in \fcp$. The \emph{property distance} $d : \fcp^2 \rightarrow I$ is given by $d(\phi, \phi') = \nu(\phi \sym \phi')$.
\end{definition}

\begin{proposition}
\label{prop:triangle}
    $d$ satisfies the triangle inequality.
\end{proposition}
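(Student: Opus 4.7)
The plan is to reduce the triangle inequality for $d$ to two standard facts: a set-theoretic containment for symmetric differences, and the monotonicity and finite subadditivity of $\nu$ (both of which follow from the previously proved Theorem that $\nu$ is a measure).

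First I would establish the set-theoretic inclusion
\begin{align*}
\phi \sym \phi'' \;\subseteq\; (\phi \sym \phi') \cup (\phi' \sym \phi'')
\end{align*}
for arbitrary $\phi, \phi', \phi'' \in \fcp$. This is a standard pointwise argument: if a trace $\sigma$ lies in $\phi \sym \phi''$, then it belongs to exactly one of $\phi$ and $\phi''$; depending on whether $\sigma \in \phi'$ or not, $\sigma$ must then lie in at least one of $\phi \sym \phi'$ and $\phi' \sym \phi''$. No properties of LTL or of $\nu$ are needed here—this is purely about the set operation $\sym$.

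Next I would apply the measure-theoretic properties of $\nu$. Since $\nu$ is a measure (by the preceding Theorem), it is monotone and finitely subadditive. Applying these in turn to the containment above yields
\begin{align*}
d(\phi, \phi'') = \nu(\phi \sym \phi'') \;&\le\; \nu\bigl((\phi \sym \phi') \cup (\phi' \sym \phi'')\bigr) \\
&\le\; \nu(\phi \sym \phi') + \nu(\phi' \sym \phi'') \\
&=\; d(\phi, \phi') + d(\phi', \phi'').
\end{align*}

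I do not expect a serious obstacle. The only subtlety worth checking is that monotonicity and finite subadditivity of $\nu$ are legitimately available: monotonicity follows because $\nu$ is nonnegative and countably additive on disjoint sets (write $B = A \cup (B \setminus A)$), and finite subadditivity follows from monotonicity plus additivity on the disjoint decomposition $X \cup Y = X \cup (Y \setminus X)$. Both of these are consequences of the Theorem already proved in the paper, so no extra work on the map $f$ or on the non-injectivity issue is required—that was already handled when establishing countable additivity of $\nu$.
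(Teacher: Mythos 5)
Your proof is correct and follows essentially the same route as the paper: the paper likewise establishes monotonicity and finite subadditivity of $\nu$ as intermediate steps (its items (a) and (b)) and then applies them to the containment $\phi \sym \phi'' \subseteq (\phi \sym \phi') \cup (\phi' \sym \phi'')$. The only cosmetic difference is that you spell out the pointwise argument for that containment while the paper states it without proof.
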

\begin{proof}
    Let $\phi, \phi' \in \fcp$. We provide some intermediate results:
    \begin{enumerate}[label=\alph*.]
        \item Here we show that $\phi \subseteq \phi' \Rightarrow \nu(\phi) \le \nu(\phi')$;
        Suppose $\phi \subseteq \phi'$; then $\nu(\phi') = \nu\left((\phi' - \phi) \cup \phi\right) = \nu(\phi' - \phi) + \nu(\phi) \ge \nu(\phi)$ because $\nu$ is nonnegative.
        \item Now we wish to show that $\nu(\phi \cup \phi') \le \nu(\phi) + \nu(\phi')$. We observe that
        \begin{align*}
            \nu(\phi \cup \phi') &= \nu\left((\phi - \phi') \cup (\phi' - \phi) \cup \phi \cap \phi' \right)\\ &=
        \nu(\phi - \phi') + \nu(\phi' - \phi) + \nu(\phi \cap \phi') \\ &\le \nu(\phi - \phi') + \nu(\phi' - \phi) + 2 \nu(\phi \cap \phi') \\ &= \nu(\phi) + \nu(\phi').
        \end{align*}
    \end{enumerate}
Let $\psi \in \fcp$. We have $\phi \sym \phi' \subseteq (\phi \sym \psi) \cup (\phi' \sym \psi)$. Therefore, applying (a) followed by (b), we obtain
$d(\phi, \phi') \le d(\phi, \psi) + d(\phi', \psi)$.
\end{proof}

\begin{proposition}
    $d$ does not satisfy the strong triangle inequality.
\end{proposition}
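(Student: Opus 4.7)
The plan is to refute the strong (ultrametric) triangle inequality, namely the claim that $d(\phi, \phi') \le \max\{d(\phi, \psi), d(\phi', \psi)\}$ for all $\phi, \phi', \psi \in \fcp$, by exhibiting an explicit counterexample with three properties whose pairwise distances violate this bound. The cleanest setting is the one-proposition alphabet $\Sigma = \{a\}$, because there the measure of very simple properties can be read off directly via the bounded counting interpretation established earlier.

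Concretely, I would take $\phi = \true$, $\phi' = \false$, and $\psi = a$ (interpreted as the Boolean property that $a$ holds at time $0$, with no constraint on later steps). By \eqref{eq:notMeasure} and the base case for Boolean formulas, $\nu(\psi) = 1/2$. Then the symmetric differences are $\phi \sym \phi' = \true$, $\phi \sym \psi = \neg \psi$, and $\phi' \sym \psi = \psi$, whose measures are $1$, $1/2$, and $1/2$ respectively. Hence
\begin{equation*}
d(\phi, \phi') = 1 \quad > \quad \tfrac{1}{2} = \max\{\,d(\phi,\psi),\, d(\phi',\psi)\,\},
\end{equation*}
which contradicts the strong triangle inequality.

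The argument is straightforward; there is essentially no obstacle beyond picking three properties whose symmetric differences can be computed without having to invoke \textproc{Count}. I would keep the proof to a single short paragraph exhibiting the counterexample and the three numerical values, referencing the bounded-case agreement between $\nu$ and the counting interpretation (established in the paragraph labeled \textbf{Bounded case}) to justify the values of $\nu(\true)$, $\nu(\false)$, and $\nu(\psi)$ without further calculation.
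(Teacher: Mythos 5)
Your proof is correct and follows essentially the same strategy as the paper: both refute the ultrametric inequality by exhibiting an explicit counterexample triple and computing the three distances (the paper uses an empty $\phi''$ together with two disjoint positive-measure properties, so that $d(\phi,\phi') = \nu(\phi)+\nu(\phi') > \max\{\nu(\phi),\nu(\phi')\}$, whereas you use $\true$, $\false$, and $a$; the arithmetic in both cases is immediate). The only difference is the choice of witnesses, which is cosmetic.
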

 \begin{proof}
     Let $\phi, \phi', \phi'' \in \fcp$. The strong triangle inequality, or ultrametric inequality, requires $d(\phi, \phi') \le \max\{d(\phi, \phi''), d(\phi', \phi'')\}$.
     Suppose that $\phi''$ is empty and $\phi$ and $\phi'$ are nonempty, disjoint, and have positive measure. Then $d(\phi, \phi') = \nu(\phi) + \nu(\phi') > \max\{\nu(\phi), \nu(\phi')\} = \max\{d(\phi, \phi''), d(\phi', \phi'')\}$. This counterexample proves the proposition.
 \end{proof}

Let $\bar\fcp = \fcp/R$, where $R = \{(\phi, \phi') \in \fcp^2 \,| \, d(\phi, \phi') = 0 \}$; that is, the equivalence classes of $\bar\fcp$ consist of properties whose pairwise distance is zero.

\begin{definition}
    Let $\bar \phi, \bar \phi' \in \bar \fcp$. We define the property metric $\bar d: {\bar \fcp}^2 \rightarrow \mathbb{R} $ as $d(\bar \phi, \bar \phi') = d(\phi, \phi')$, where $\phi \in \bar \phi$ and $\phi' \in \bar \phi'$.
\end{definition}

\begin{proposition}
    $\bar d$ is a metric.
\end{proposition}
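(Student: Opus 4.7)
The plan is to verify the four metric axioms for $\bar d$, leveraging Proposition \ref{prop:triangle} and the properties of $\nu$ and $\sym$ established in the excerpt. Before that, I would first observe that $R$ is actually an equivalence relation and that $\bar d$ is well-defined on equivalence classes, since these facts are implicitly needed for the statement to make sense.

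First I would verify that $R$ is an equivalence relation. Reflexivity holds because $\phi \sym \phi = \emptyset$ so $d(\phi,\phi) = \nu(\emptyset) = 0$; symmetry is immediate since $\phi \sym \phi' = \phi' \sym \phi$; transitivity follows from Proposition \ref{prop:triangle}, because if $d(\phi,\phi') = d(\phi',\phi'') = 0$ then $d(\phi,\phi'') \le d(\phi,\phi') + d(\phi',\phi'') = 0$, and nonnegativity of $\nu$ forces equality to zero.

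Next I would check well-definedness of $\bar d$. Given $\phi_1, \phi_2 \in \bar\phi$ and $\phi_1', \phi_2' \in \bar\phi'$, two applications of the triangle inequality for $d$ give
\begin{align*}
d(\phi_1,\phi_1') &\le d(\phi_1,\phi_2) + d(\phi_2,\phi_2') + d(\phi_2',\phi_1') = d(\phi_2,\phi_2'),
\end{align*}
and symmetrically $d(\phi_2,\phi_2') \le d(\phi_1,\phi_1')$, so the value $\bar d(\bar\phi,\bar\phi')$ is independent of representatives.

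Finally I would verify the metric axioms. Nonnegativity and symmetry of $\bar d$ are inherited directly from $\nu$ being a measure and from $\sym$ being symmetric. The triangle inequality transfers verbatim from Proposition \ref{prop:triangle} applied to any choice of representatives. The only axiom that genuinely uses the quotient construction is identity of indiscernibles: $\bar d(\bar\phi,\bar\phi') = 0$ iff $d(\phi,\phi') = 0$ iff $(\phi,\phi') \in R$ iff $\bar\phi = \bar\phi'$, which holds by the very definition of $R$.

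I do not anticipate any real obstacle here; the content of the proposition is essentially bookkeeping that converts the pseudometric $d$ into a genuine metric on the quotient. The only step worth spelling out carefully is well-definedness, since all the other axioms either are inherited from $d$ or follow tautologically from the definition of $R$.
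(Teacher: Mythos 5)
Your proposal is correct and follows essentially the same route as the paper: well-definedness of $\bar d$ via the triangle inequality for $d$ (Proposition \ref{prop:triangle}), then nonnegativity and symmetry inherited from $\nu$ and $\sym$, and identity of indiscernibles read off directly from the definition of $R$. Your additional check that $R$ is an equivalence relation is a small point the paper leaves implicit, and your well-definedness argument varying both representatives simultaneously is a minor (harmless) variation on the paper's one-representative-at-a-time version.
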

\begin{proof}
    First we have to show that $\bar d$ is well-defined. Let $\bar \phi$ and $\bar \phi'$ be equivalence classes in $\bar \fcp$ and let $\phi, \psi \in \bar \phi$ and $\phi' \in \bar \phi'$. We have
    \begin{align*}
        d(\phi, \phi') &\le d(\phi, \psi) + d(\phi', \psi) = d(\phi', \psi)
        \\ &\le
        d(\phi', \phi) + d(\phi, \psi) = d(\phi', \phi)
        \Rightarrow d(\phi, \phi') = d(\phi', \psi).
    \end{align*} 
    Thus, $\bar d$ is independent of the members of the equivalence classes used in its computations. It follows that $\bar d$ is well-defined. Moreover, $\bar d$ is nonnegative because $\nu$ is nonnegative; it satisfies the triangle inequality because $d$ satisfies it (see Proposition \ref{prop:triangle}); it is symmetric due to the symmetry of $\sym$. It remains to be shown that $\bar d(\bar \phi, \bar \phi') = 0 \Leftrightarrow \bar \phi = \bar \phi'$.

    Suppose $d(\bar \phi, \bar \phi')=0$ and $\phi \in \bar \phi$ and $\phi' \in \bar \phi'$. Then $d(\phi, \phi') = 0$. But this means that $\phi$ and $\phi'$ belong to the same equivalence class in $\bar \fcp$, so $\bar \phi = \bar \phi '$. Conversely, suppose that $\bar \phi = \bar \phi '$ and let $\phi \in \bar \phi$ and $\phi' \in \bar \phi' = \phi$. Then $\phi$ and $\phi'$ belong to the same equivalence class in $\bar \fcp$, which means that $d(\phi, \phi') = 0$; therefore, $\bar d(\bar \phi, \bar \phi') = 0$. 
\end{proof}

In consequence, we can use $d$ to measure the distance between two given LTL properties. Moreover, the distance between two formulas will be zero iff they differ by a property of measure zero.

\subsection{Calculating with the property measure}
\label{sc:graphicalCalculus}

Suppose $\phi$ is defined over a single atomic proposition $a$ and $\phi = a$. Then $\phi$ contains all traces that start with a 1. Clearly, $f\phi = [0.5, 1]$. Therefore, application of \eqref{eq:propMeasure} results in $\nu(\phi) = 0.5$.

Suppose $\phi = a \land \X a \lor \neg \X^2 a$. We observe that $\nu(\X a)$ contains all numbers whose second digit is 1. Thus, $\nu(\X a) = [1/4,1/2] \cup [3/4,1]$. Figure \ref{fg:measureExample} shows the intervals to which each of the terms in $\phi$ map. We observe that $a$ has one segment of size $2^{-1}$; $\X a$ has two segments of size $2^{-2}$; and $X^2 a$ has 4 segments of size $2^{-3}$. The property $a \land \X a$ intersects the segment of $a$ with one of the segments of $\X a$, as shown in Figure \ref{fg:measureExampleb}. Thus, we have $\nu(a \land \X a) = 2^{-2}$. Finally, we observe that we have to add three of the 4 segments of $\X^2 a$ to get the final answer, and we obtain $\nu(\phi) = 2^{-2} + 3 \cdot 2^{-3} = 5/8$.

\begin{figure}[h]
    \centering
    \begin{subfigure}{0.45\columnwidth}
        \resizebox{0.95\columnwidth}{!}
        {    \begin{tikzpicture}
\begin{axis}[
    title={},
    xlabel={},
    ylabel={},
    xmin=0, xmax=1.0,
    ymin=0, ymax=2,
    xtick={0,0.25, 0.5, 0.75, 1.0},
    ytick={0.5, 1.0, 1.5},
    yticklabels={$\neg\X^2 a$, $\X a$, $a$},
    legend pos=north west,
    ymajorgrids=true,
    grid style=dashed,
]
 
\addplot[
    color=black,
    thick,
    ]
    coordinates {
    (0.5,1.5)(1.0,1.5)
    };
    
\addplot[
    color=black,
    thick,
    ]
    coordinates {
    (0.75,1.0)(1.0,1.0)
    };
    
\addplot[
    color=black,
    thick,
    ]
    coordinates {
    (0.25,1.0)(0.5,1.0)
    };
    
\addplot[
    color=black,
    thick,
    ]
    coordinates {
    (0.0,0.5)(0.125,0.5)
    };
\addplot[
    color=black,
    thick,
    ]
    coordinates {
    (0.25,0.5)(0.375,0.5)
    };
\addplot[
    color=black,
    thick,
    ]
    coordinates {
    (0.5,0.5)(0.625,0.5)
    };
\addplot[
    color=black,
    thick,
    ]
    coordinates {
    (0.75,0.5)(0.875,0.5)
    };
    \legend{}
 
\end{axis}
\end{tikzpicture}}
        \caption{}
        \label{fg:measureExamplea}
    \end{subfigure}
        \begin{subfigure}{0.45\columnwidth}
        \resizebox{0.95\columnwidth}{!}
        {    \begin{tikzpicture}
\begin{axis}[
    title={},
    xlabel={},
    ylabel={},
    xmin=0, xmax=1.0,
    ymin=0, ymax=2,
    xtick={0,0.25, 0.5, 0.75, 1.0},
    ytick={0.5, 1.0},
    yticklabels={$\neg \X^2 a$, $a \land \X a$},
    legend pos=north west,
    ymajorgrids=true,
    grid style=dashed,
]

\addplot[
    color=black,
    thick,
    ]
    coordinates {
    (0.75,1.0)(1.0,1.0)
    };
    
\addplot[
    color=black,
    thick,
    ]
    coordinates {
    (0.0,0.5)(0.125,0.5)
    };
\addplot[
    color=black,
    thick,
    ]
    coordinates {
    (0.25,0.5)(0.375,0.5)
    };
\addplot[
    color=black,
    thick,
    ]
    coordinates {
    (0.5,0.5)(0.625,0.5)
    };
\addplot[
    color=black,
    thick,
    ]
    coordinates {
    (0.75,0.5)(0.875,0.5)
    };
    \legend{}
 
\end{axis}
\end{tikzpicture}}
        \caption{}
        \label{fg:measureExampleb}
    \end{subfigure}
\caption{Computing the measure of the property $a \land \X a \lor \neg \X^2 a$.}
\label{fg:measureExample}
\end{figure}

\section{Related Work}
\label{sc:relWork}
A \emph{robust} semantics of LTL
% , where each property is either true or it is violated according to 4 degrees of severity, 
has been presented by Tabuada \textit{et al.} in \cite{tabuada15robustltl} which formalizes the idea
that a trace that violates a property only finitely many times is preferable over one that always violates it;
the trace that is in lesser violation must therefore be ``closer'' to satisfying the specification. For LTL 
Assume/Guarantee properties ($\varphi \Rightarrow \psi$), a \textit{da Costa} algebra is used to compute how much a trace is allowed to diverge from $\psi$ given a violation of $\varphi$.
%Our measure, which does not contradict the formalism introduced by Tabuada \textit{et al.}, could provide additional insights in quantifying
%the amount of the violation between $\psi$ and $\varphi$.
Other papers, \cite{jaksic16stldistance}, \cite{dealfaro04quantitative}, by defining metrics, 
also investigate the relation between traces and temporal properties.
The main difference between these papers and our approach is that we focus solely on specifications, 
irrespective of particular implementations and the degree to which they satisfy those specifications.

In \cite{finkbeiner14counting}, Finkbeiner \textit{et al.} explore the theoretical complexity of the problem of LTL model counting for safety formulas, 
extended to full LTL by Torfah \textit{et al.} \cite{Torfah2018}.
They distinguish between two classes of models, i.e., word- and tree-models of bounded size. 
In this paper, we focus on sets of traces, which closely relate to the problem of word-model counting. 
% which are especially useful in the context of model-checking and synthesis, respectively. 
It is shown in \cite{finkbeiner14counting} that this counting problem is \textsf{\#PSPACE}-complete when a binary-encoded bound is used, 
and they provide a counting algorithm that is linear in the temporal bound but double-exponential in the size of the formula.
% Similarly, in the case of tree-models, the proposed algorithm is linear in the temporal bound, but triple-exponential in the size of the formula.
Although the complexity of the problem in the worst case lies within this theoretical framework, our compositional approach is more tractable in practice.
Indeed, by invoking a SAT model counter on smaller problem instances, we can compute the measure of longer properties,
which are hard to analyze using the techniques introduced by Finkbeiner \textit{et al.}

The work that we find most closely related to our own is a recent pre-print by Madsen \textit{et al.} \cite{madsen18metric},
which proposes two metrics for Signal Temporal Logic (STL) properties which are assumed rectangular and belonging to a compact vector space.
One is based on the \textit{Hausdorff} distance, and
% and it is particularly useful in determining what is the minimal set of traces to add to a property $\varphi_1$ if we want it to be implied by 
% another property $\varphi_2$.
the other is based on the Symmetric Difference (SD) between two properties.
%and it is useful to compare them according to their temporal behavior.
Albeit developed independently, our LTL metric shares significant similarities with their SD metric.
There are, however, some fundamental differences.
Madsen \textit{et al.} focus on STL, and formulates some restrictive assumptions to ensure the metrics and the proposed algorithms are correct.
We focus on LTL, and do not make any such assumptions.
Moreover, the Lebesgue measure used in the SD metric is applied directly to properties, 
whereas we first transform the property into a subset of $I$ and take the Lebesgue measure afterwards.
Lastly, unlike the SD metric, our metric can be computed compositionally provided the necessary independence conditions.

\section{Conclusions}
\label{sc:conclusion}

This paper provides a measure and a metric for sets of infinite traces satisfying a given LTL property (referred to as measure of the LTL formula) defined over a finite set of atomic propositions. We have shown how the measure behaves for bounded LTL properties, and provided an implementation that computes the measure of bounded LTL properties based on considerations of independence and on model counting. We plan to continue developing measure composition rules to handle common properties. Now that we have a measure and a metric available, we plan to use it to implement greedy approaches for LTL synthesis.

Future theoretical work includes developing the graphical calculus outlined in Section \ref{sc:graphicalCalculus}. This may enable more efficient means for computing the property measure and for carrying out LTL model counting. Moreover, the property intervals we use to compute the measure may have implications for model checking and synthesis. Indeed, we showed that these intervals simultaneously encode the measure and the logical structure of the properties. Moreover, the map $f: \fcp \rightarrow 2^I$ mapping properties to subsets of the unit interval possibly is also capable of mapping morphisms of properties to morphisms of subsets of $I$. This would turn $f$ into a functor. Researching this could unveil additional algebraic structure that may be possible to import into the world of properties via tools from category theory. Finally, we have shown how the property measure can be extended to weighted property measures, but we leave the explorations of specific costs for future work.

\section*{Acknowledgments}

We gratefully acknowledge valuable conversations with Tiziano Villa and Daniel Fremont during the preparation of this manuscript. 

The work in this paper was supported in part by the National Science
Foundation (NSF), awards \#CNS-1739816 (Quantitative Contract-Based Design Synthesis and Verification for CPS Security) and
\#CNS-1836601 (Reconciling Safety with the Internet), and the iCyPhy Research Center (Industrial Cyber-Physical
Systems), supported by Avast, Camozzi Group, Denso, Ford, and Siemens.

% we are not adding any appendix
\begin{comment}
\appendix
\section{Additional Proofs}
\input{sections/appendix.tex}
\end{comment}

\bibliographystyle{support/splncs04}
\bibliography{support/references}

\end{document}